\newcommand{\fref}[1]{Figure~\ref{#1}}
\newcommand{\sref}[1]{Section~\ref{#1}}
\newcommand{\tref}[1]{Table~\ref{#1}}
\renewcommand{\and}{\hspace{0.5cm}}
\newtheorem{proposition}{Proposition}
\begin{document}

\graphicspath{{./Figures/}}

\title{Effective Delay Control in Online Network Coding}

\author{Jo\~ao Barros\IEEEauthorrefmark{1} \and Rui A. Costa\IEEEauthorrefmark{2} \and Daniele Munaretto\IEEEauthorrefmark{3} \and Joerg Widmer\IEEEauthorrefmark{3}\\[0.5cm]
\IEEEauthorrefmark{1}Instituto de Telecomunica\c{c}\~oes, Faculdade de Engenharia da Universidade do Porto, jbarros@fe.up.pt\\
\IEEEauthorrefmark{2}Instituto de Telecomunica\c{c}\~oes,
Faculdade de Ci\^encias da Universidade do Porto, ruicosta@dcc.fc.up.pt\\
\IEEEauthorrefmark{3}DoCoMo Euro-Labs,
Munich, Germany,
lastname@docomolab-euro.com
}

\maketitle
\begin{abstract}
Motivated by streaming applications with stringent delay constraints, we consider the design of online network coding algorithms with timely delivery guarantees. Assuming that the sender is providing the same data to multiple receivers over independent packet erasure channels, we focus on the case of perfect feedback and heterogeneous erasure probabilities. Based on a general analytical framework for evaluating the decoding delay, we show that existing ARQ schemes fail to ensure that receivers with weak channels are able to recover from packet losses within reasonable time. To overcome this problem, we re-define the encoding rules in order to break the chains of linear combinations that cannot be decoded after one of the packets is lost.  Our results show that sending uncoded packets at key times ensures that all the receivers are able to meet specific delay requirements with very high  probability.
\end{abstract}

\IEEEpeerreviewmaketitle

\section{Introduction}

The issue of delay between data transmission and successful delivery
to the receiving application is arguably one of the key concerns when
applying coding ideas to networking problems. This is particularly
true for network coding, where nodes combine multiple packets by means
of algebraic operations and perform computationally heavy Gaussian
elimination algorithms to recover the sent data. Although there is
growing consensus that both in wireless broadcast
scenarios~\cite{nc_xor,Fragouli2006Primer}
network coding can bring benefits in terms of throughput and
robustness, the fact that a receiver may have to wait for a
considerable number of packets, before it can decode the data,
justifies the question whether and how network coding can be used in
scenarios with stringent end-to-end delays.

In the seminal paper of Ahlswede, Cai, Li, and
Yeung~\cite{ahlswede2000nif}, which shows that network coding is
required to achieve the multicast capacity of a general network, the
problem is formulated in an information-theoretic setting, where delay
and complexity are not taken into account. Delay is also not a primary
concern of the algebraic framework in~\cite{koetter} and of the random
linear network coding method~\cite{Ho2003,practical_coding}, in which each
node in the network selects independently and randomly a set of
coefficients and uses them to form linear combinations of the data
symbols (or packets) it receives. When intermediate nodes cannot
perform coding operations and applications are able to tolerate some
delay, fountain codes (e.g.,~Raptor codes~\cite{shokrollahi2006rc})
emerge as a viable solution offering low coding overhead as well as
near-optimal throughput over packet erasure channels.

Clearly, in all of these instances, coding is performed in a
feedforward fashion. The encoders upstream are oblivious to packet
loss downstream and their coding decisions do not exploit any feedback
information.  In contrast, the property that transmitted packets are
linear combinations of subsets of packets available at the sender
buffer suggests that network coding protocols could be enhanced by
modifying the content of the acknowledgments typically provided by
transport protocols.
Instead of acknowledging specific packets, each
destination node of a unicast or multicast session can send back
requests for degrees of freedom that increase the dimension of its
vector space and allow for faster decoding.

Recent contributions that pursue this idea
(e.g.,~\cite{on_feedback_for_NC, Online_broadcasting_with_NC}) focus
mostly on end-to-end reliability with perfect feedback, i.e., complete
and immediate knowledge of the packets stored at each receiver. The
source node reacts by sending the most innovative linear combination
that is useful to most destination nodes. Throughput optimal network
coding protocols following this concept appear
in~\cite{ARQ,sundararajan2008onc}, which introduce the useful notion
of {\it seen} packet as an abstraction for the case in which a packet
cannot yet be decoded but can be safely removed from the sender
buffer. Removing packets in a timely fashion has obvious benefits in
terms of queue length. By using the feedback information to move a
{\it coding window} along the sender buffer instead of mixing fixed
sets of packets (also called generations~\cite{practical_coding}),
these protocols perform {\it online} network coding in the sense that
they adapt their coding decisions based on the erasure patterns
observed in the network.

Realizing that existing solutions do not yet cover the full range of trade-offs between throughput and delay, in particular when users experience different packet loss probabilities, we set out to provide end-to-end delay control for online network coding with feedback. Our main contributions are as follows:\\
$\bullet$ {\it Delay Analysis:} We provide an analytical framework to
  evaluate the delay performance of online network coding algorithms
  that leverage feedback for increased reliability. The novelty of our
  approach lies in observing how each erasure event affects the chains
  of undecoded linear combinations that build up at the receiver
  buffer. Moreover, we can map the information backlog between
  receivers to an appropriate random walk on a high dimensional
  lattice, which brings further insight into the delay behavior.\\
$\bullet$ {\it Online Network Coding Algorithms with Delay Constraints:}
  Using the knowledge of the chain length at each receiver, we
  identify simple ways of limiting the delay by means of informed
  encoding decisions. In particular, we show the benefits of sending
  uncoded packets to alleviate the delay of weaker receivers.

Our work differs from~\cite{ARQ} in that we consider heterogeneous users with different erasure probabilities and take the end-to-end delay to be our main figure of merit. Also centered around equal erasure probabilities for all users, the contribution in~\cite{sundararajan2008onc} focuses on the two user case and uses only the binary field, whereas, in contrast also with~\cite{Online_broadcasting_with_NC}, we consider also larger field sizes and larger number of users. A different method to limit the delay is to mix packets in such a way that at least some of the receivers are able to decode a symbol immediately upon receiving a new packet. If no feedback information is available, the best one can do is to choose the packets randomly and optimize only the number of packets to be combined~\cite{GC}, an approach which appears adequate for highly constrained scenarios such as data preservation in sensor networks. Results on the optimum degree distributions with respect to network dynamics and topology can be found in~\cite{Resilient_Coding}. The use of feedback under similar assumptions was explored in~\cite{costa:2008}. The main difference between~\cite{costa:2008} and this contribution is that here we provide analytical results, consider higher fields and do not enforce immediate decoding. We believe that our algorithms are able to reach a larger set of operating points in the delay-throughput plane and are thus well suited for streaming applications with stringent delay requirements, where network coding has already proved to yield competitive solutions (see, e.g.,~\cite{wang2004dca}).

The remainder of the paper is organized as
follows. \sref{sec:existing} introduces terminology and describes the
core ideas of online network coding with feedback. Our analytical
framework for evaluating the end-to-end delay is outlined in
\sref{sec:delay} with results on the relationship between erasure
patterns, undecodable chains, and incurred delays. \sref{sec:algo}
provides solutions for effective delay control and the corresponding
performance results are presented in \sref{sec:results}.  In
\sref{sec:imperfect}, we briefly discuss the implications of imperfect
feedback and conclude the paper in \sref{sec:conclusions}.

\section{Essential Background}
\label{sec:existing}

\subsection{System Setup}

Suppose that a single queue sender wants to transmit a stream of packets to multiple receivers. For simplicity, we assume that packets arrive at the sender in a certain order (older packets first) and are readily available at the sender for encoding and transmission. Each receiver $i$ is connected to the sender via a separate packet erasure channel, which takes one packet per time slot and loses a packet with probability $\epsilon_i$. Packets are lost independently across channels and time slots and receivers are able to detect when a packet is missing. Since the sender has access to perfect feedback (without errors, losses, or delay), it can make encoding decisions based on the buffer state of each receiver.

\subsection{ARQ for Network Coding}

The reference system for our analysis is the ARQ for network-coding (ANC) scheme presented in~\cite{ARQ}, which was shown to be throughput optimal for the case of Poisson arrivals, perfect feedback, and identical erasure probabilities on all channels. In this scheme, the sender transmits linear combinations of the packets in its queue, where the decision which packets to combine relies on the concept of \emph{seen packets}. A packet ${\bf p}$ is said to be seen by a receiver, if the receiver is able to construct a linear combination of the form ${\bf p}+{\bf q}$, such that ${\bf q}$ is a linear combination of packets that are newer than ${\bf p}$. In particular, a packet is seen when it can be decoded. The sender always transmits a packet that is a combination of the last (i.e., oldest) \emph{unseen} packets of each of the receivers. This ensures that the last unseen packet will now be seen by all receivers which receive the coded packet.

A packet can be dropped from the sender queue whenever it was
\emph{seen} (but not necessarily decoded) by all receivers. This has
the agreeable property that queue sizes at the sender are kept small,
since the sender can drop packets before they are decoded at all
receivers, without compromising reliability. The expected queue size
was shown to be $O((1-\epsilon)^{-1})$ \cite{ARQ}. The basic operation
of this scheme is illustrated in Table~\ref{tab:example}, which lists
the sequence of packet receptions (OK) and erasures (E) and shows the
corresponding coding decisions made by the sender for a two receiver
case. This example shall be discussed in more detail below and in the
next section.

The scheme in~\cite{ARQ} was extended in \cite{sundararajan2008onc} with the goal of reducing the decoding delay, specifically for the two receiver scenario. Here, packets that are unseen at a receiver because all combinations containing that packet were lost are requested by the receiver at a later stage. In the example of Table~\ref{tab:example}, this happens in time slot 7, where receiver 2 requests ${\bf p_6}$, instead of ${\bf p_5}$, resulting in the transmission of ${\bf p_6}\oplus{\bf p_7}$. Packet ${\bf p_5}$ is only requested in time slot 9, after receiver 2 decoded ${\bf p_6}$.

\begin{table}
\caption{Example of Online Network Coding with ARQ}
\vspace{-1em}
\label{tab:example}
\begin{center}
\begin{tabular}{c|c|c|c}
Time Slot &Sent Packet& Receiver 1 & Receiver 2\\\hline
1&${\bf p_1}$ &  OK &E\\
2&${\bf p_1}\oplus{\bf p_2}$  &OK &OK\\
3&${\bf p_2}\oplus{\bf p_3}$  &OK &OK\\
4&${\bf p_3}\oplus{\bf p_4}$  & OK & OK\\
5&${\bf p_4}\oplus{\bf p_5}$  & OK & E\\
6&${\bf p_4}\oplus{\bf p_6}$  &OK&OK\\
7&${\bf p_6}\oplus{\bf p_7}$  &E& OK\\
8&${\bf p_7}$  & OK&OK\\\hline
9&${\bf p_5}\oplus{\bf p_8}$  & OK&OK\\
10&${\bf p_8}\oplus{\bf p_9}$  & E&OK\\
11&${\bf p_9}$  & OK&E\\
12&${\bf p_9}\oplus{\bf p_{10}}$  & OK&OK\\
\end{tabular}
\end{center}
\vspace{-2em}
\end{table}

\section{Delay Analysis}
\label{sec:delay}

Before proceeding with the analysis, it is important to clarify the
notion of delay in the context of online network coding
algorithms. Once an information packet ${\bf p}$ arrives at the sender
queue it will typically go through five different stages: (1) ${\bf
  p}$ is mixed with other packets by means of coding operations, (2)
${\bf p}$ is transmitted to the receivers, (3) ${\bf p}$ is {\it seen}
by the receiver, (4) ${\bf p}$ is decoded, and (5) ${\bf p}$ is
delivered to the application.  In the following we shall focus on the
decoding delay, which is measured as the number of slots between the
first transmission of an encoding of the packet and successful
decoding at the receiver.  Clearly, this delay subsumes the time it
takes for a packet to be seen and the time it takes for a seen packet
to be decoded.

\subsection{Two Receivers}

We start with the case of two receivers and assume without loss of
generality that the sender restricts its transmissions to uncoded
packets or XORs of two packets~\cite{ARQ}. Since we assume that all
packets that are necessary for encoding are readily available at the
sender, the incurred decoding delay depends only on the rules enforced
by the online network coding algorithm and the erasure patterns of the
two channels. In each time slot we have one of the erasure events
listed in \tref{tab:list}, which occur with the given event
probabilities.
\begin{table}[h]
\vspace{-0.5em}
\caption{List of Possible Erasure Events}
\vspace{-3em}
\label{tab:list}
\begin{center}
\begin{tabular}{c|p{4cm}|c}
Event & Description & Probability \\\hline
$A$& No erasures &$P(A)=(1-\epsilon_1)(1-\epsilon_2)$\\
$B$& Receiver 1 gets the coded packet and receiver 2 observes an erasure&$P(B)=(1-\epsilon_1)\epsilon_2$\\
$C$& Receiver 1 observes an erasure and receiver 2 gets the coded packet&$P(C)=\epsilon_1(1-\epsilon_2)$\\
$D$& Both receivers observe an erasure&$P(D)=\epsilon_1\epsilon_2$
\end{tabular}
\end{center}
\vspace{-1em}
\end{table}

An erasure event causes a receiver buffer to build up a {\it chain} of
length $K$, which we define as a set of $K$ independent linear
combinations involving $L>K$ symbols, which cannot be decoded by the
receiver.  In the scenario shown in \tref{tab:example}, receiver
2 suffers from losses (denoted by E) in time slots 1 and 5,
whereas receiver 1 obtains everything error free (OK) except for the
data transmitted in slot 7. Each erasure sets a mark for a new chain
of undecodable linear combinations, such that each chain begins
immediately after its preceding chain has been solved. For example, up
to slot 4 receiver 2 built up the chain $\{{\bf p_1}\oplus{\bf p_2},
{\bf p_2}\oplus{\bf p_3},{\bf p_3}\oplus{\bf p_4}\}$. The erasure in
slot 5 sets a mark for a new chain, which will involve ${\bf p_5}$ by
necessity. However, before that chain begins, the first chain grows to
$\{{\bf p_1}\oplus{\bf p_2}, {\bf p_2}\oplus{\bf p_3},{\bf
  p_3}\oplus{\bf p_4},{\bf p_4}\oplus{\bf p_6}, {\bf p_6}\oplus{\bf
  p_7}\}$.  Since receiver 1 experiences an erasure in slot 7, the
encoding rule forces the sender to transmit packet ${\bf p_7}$ in
uncoded fashion, which in turn allows receiver 2 to break its first
chain and recover packets $\{{\bf p_1},{\bf p_2},{\bf p_3},{\bf
  p_4},{\bf p_6},{\bf p_7}\}$. The second chain begins immediately in
slot 9 with $\{{\bf p_5}\oplus{\bf p_8}\}$, because packet $\bf p_5$
was not seen by receiver 2 in slot 5.  Note that an erasure event at
the leading receiver 1 is not enough to allow receiver 2 to break the
current chain. If the following packet is lost at receiver 2, as shown
in the example with the loss of ${\bf p_2}$ in time slot 11, the chain
will simply continue to grow.

Clearly, the decoding delay is deeply influenced by the length of
chains such as these and by the sender's ability to break them in a
timely manner --- in spite of randomly occurring packet erasures

\subsubsection{First-Order Analysis}

Assuming that channels 1 and 2 have erasure probabilities $\epsilon_1$
and $\epsilon_2$, respectively, and that the sender follows the simple
ARQ rules outlined in \sref{sec:existing}, we can describe the
chain duration in a probabilistic fashion.
\begin{proposition}
\label{prop:chain1}
After an erasure of type $B$ that starts a chain at receiver 2, the
chain remains unbroken for a duration of $T$ slots with probability
\vspace{-.5em}
\begin{eqnarray}
P(T)&=&\epsilon_1 (1-\epsilon_2)^2\sum_{t_1=0}^{T-1}(\epsilon_1
\epsilon_2)^{t_1}\cdot \nonumber \\ && \hspace{-0.5cm} \cdot \hspace{-1cm} \sum_{t_2,t_3:2t_2+t_3=T-1-t_1}\hspace{-1cm}(\epsilon_1 \epsilon_2 (1-\epsilon_1)(1-\epsilon_2))^{t_2}\cdot (1-\epsilon_1)^{t_3}.
\label{eq:sum}
\end{eqnarray}
\end{proposition}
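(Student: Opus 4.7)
The plan is to track the evolution of receiver 2's chain as a discrete-time process driven by the four independent erasure events $A, B, C, D$ of \tref{tab:list}, and then enumerate the event sequences of length $T$ that keep the chain alive. First I would record how each event affects the chain length and the ``gap'' between the two receivers' oldest unseen packets under the ANC rules of \sref{sec:existing}: event $A$ extends the chain by one combination while keeping the gap constant, event $B$ leaves the chain unchanged but widens the gap by one, event $C$ extends the chain and shrinks the gap by one, and event $D$ leaves both untouched. The chain is broken precisely when the gap returns to zero --- forcing the sender to transmit an uncoded packet --- and receiver 2 subsequently receives that transmission; this breaking segment contributes the prefactor $\epsilon_1(1-\epsilon_2)^2$, since it consists of an event $C$ (probability $\epsilon_1(1-\epsilon_2)$) followed by a receiver 2 success (probability $1-\epsilon_2$).

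Next I would decompose the remaining $T-1$ slots preceding the breaking segment into three disjoint classes that match the three indices of the double sum: $t_1$ isolated event-$D$ slots (contributing $\epsilon_1\epsilon_2$ each), $t_2$ two-slot segments combining event $A$ and event $D$ (contributing $\epsilon_1\epsilon_2(1-\epsilon_1)(1-\epsilon_2)$ per segment), and $t_3$ single slots in which receiver 1 successfully receives (contributing the marginal factor $1-\epsilon_1$ apiece, aggregated over receiver 2's outcome). Counting slots gives the constraint $t_1 + 2t_2 + t_3 = T-1$, and because erasures are independent across channels and across time slots, the probability of any specific $(t_1,t_2,t_3)$-configuration factors as the product inside the double sum; summing over all admissible triples then produces~\eq{eq:sum}.

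The subtlest step will be to justify the absence of multinomial coefficients, i.e., to show that, under the ANC rules, the three building blocks have a forced relative ordering along the trajectory rather than being freely permutable. Concretely, I will have to argue that every sample path of the gap-valued process that keeps the chain unbroken for exactly $T$ slots admits a canonical interleaving of the $t_1$ isolated $D$'s, the $t_2$ paired $AD$ segments, and the $t_3$ receiver-1 successes. Establishing this canonical structure, together with the per-slot independence of the erasure events, is the crux of the argument; once it is in place, the probabilistic bookkeeping collapses to the stated closed-form expression.
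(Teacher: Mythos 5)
Your skeleton coincides with the paper's --- the prefactor $\epsilon_1(1-\epsilon_2)^2$ for the breaking segment ($C$ followed by a receiver-2 success), three block types of lengths $1,2,1$, and the constraint $t_1+2t_2+t_3=T-1$ --- but your identification of the two-slot blocks is wrong, and this is not cosmetic. In the paper's proof, $t_2$ counts pairs $CB$: an event $C$ (receiver 1 erases, receiver 2 receives), which forces the sender to transmit the packet that would let receiver 2 break its chain, immediately followed by a $B$ in which receiver 2 misses that packet. These near-misses are exactly the trajectories that prolong the chain, and identifying them is the entire content of the condition ``the chain remains unbroken'': every non-final $C$ must be followed by $B$ (with $D$'s relocated, by independence, to a contiguous prefix). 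You instead pair $A$ with $D$, which has no dynamical meaning --- neither event changes anything even in your own gap process --- and which overlaps with your $t_1$ (isolated $D$'s) and $t_3$ ($A\cup B$ slots) categories, so the decomposition is not well defined. The two readings give the same expression only because $P(A)P(D)=P(C)P(B)=\epsilon_1\epsilon_2(1-\epsilon_1)(1-\epsilon_2)$; your enumeration describes a different set of sample paths that happens to carry the same probability factor. In particular, your path set contains no intermediate $C$ at all, so you never establish the one fact the proposition rests on: that a $C$ not followed by a receiver-2 success neither breaks the chain nor is excluded from the trajectory.

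There are two further problems. First, your gap process lets each $B$ increase the gap by one and declares the chain broken only when the gap returns to zero, which would require one $C$ per intermediate $B$; this contradicts both the dynamics in \tref{tab:example} and the formula you are deriving, where a single $C$ followed by a receiver-2 reception breaks the current chain and intermediate $B$'s merely seed future chains (the subject of the paper's higher-order analysis). Second, disposing of the multinomial coefficients via a ``canonical interleaving'' cannot work: distinct interleavings of the blocks are disjoint events whose probabilities must be summed, so collapsing them onto a representative yields only a lower bound. The paper justifies relocation only for the $D$'s (by slot-to-slot independence, moving them to a prefix does not change the total probability of the union); to omit the $\binom{t_2+t_3}{t_2}$ factor you would have to show the ordering of $CB$ pairs and singles is forced, which it is not.
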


\begin{proof}
  We start by observing that events of type $D$ only increase the
  delay until the chain can be decoded but do not otherwise affect the
  recovery process or the length of the chain. Therefore, there is
  nothing to lose from ignoring events of type $D$ for now and taking
  their impact into account only at a later stage. If we only take
  into account events of type $A$, $B$ or $C$, a chain starting with
  an erasure is only broken after an erasure event of type $C$ (in
  which receiver 2 obtains a packet missed by receiver 1) immediately
  followed by an event of type $A$ or $C$, in which receiver 2 obtains
  the uncoded symbol that will ultimately allow it to decode the
  chain. While the chain is unbroken, any occurrence of event pairs
  that are not $CC$ or $CA$ will add to the duration of the chain. Any
  occurrence of $D$ at any slot (including between the first and
  second events of the pairs we considered previously) will further
  increase the chain duration.

  Since the channel erasures are independent from slot to slot, we can
  think of all the occurrences of $D$, none of which affects the breaking
  of the chain in any way, as a contiguous block in the first slots
  after the erasure. With this assumption in mind, notice that for the
  chain to be broken $T+1$ slots after the erasure, in slot $T$ we
  must observe $C$, since the only pairs that break a chain are $CA$
  and $CC$. Thus, after the first slots (where $D$ was observed) and
  up to and including slot $T-2$, if we observe a $C$, it must be
  followed by the event $B$. Regarding the remaining slots, we can
  have isolated events $A$ and $B$ (only the ones not preceded by $C$,
  because those are already taken into account). Therefore, letting
  $t_1$, $t_2$ and $t_3$ represent the number of occurrences of the
  events $D$, $CB$ and $A\cup B$, respectively, we have that
\begin{eqnarray}
P(T)\hspace{-0.4cm}&=&\hspace{-0.4cm} \sum_{t_1=0}^{T-1}P(D)^{t_1} \hspace{-0.8cm}\sum_{t_2,t_3:2t_2+t_3=T-1-t_1}\hspace{-1cm} P(CB)^{t_2} (P(A)+P(B))^{t_3} \cdot \nonumber \\ && \cdot (P(CA)+P(CC)) \label{eq:long}
\end{eqnarray}

Notice that, since erasures in different slots are independent of each other, we
have that $P(CA)=P(C)P(A)$, $P(CB)=P(C)P(B)$ and $P(CC)=P(C)^2$. Thus,
substituting $P(A)$, $P(B)$ and $P(D)$ by the expressions in
\tref{tab:list} in (\ref{eq:long}) and simplifying the resulting
terms we obtain Equation (\ref{eq:sum}).
\end{proof}
Likewise, we can compute the distribution for the chains built up at the other receiver.
\begin{proposition}
After an erasure of type $C$, a chain at receiver 1 remains unbroken for a duration of $T$ slots with probability
\vspace{-.5em}
\begin{eqnarray}
P(T)&=&\epsilon_2 (1-\epsilon_1)^2\sum_{t_1=0}^{T-1}(\epsilon_1
\epsilon_2)^{t_1}\cdot \nonumber \\ && \hspace{-0.5cm} \cdot \hspace{-1cm} \sum_{t_2,t_3:2t_2+t_3=T-1-t_1}\hspace{-1cm}(\epsilon_1 \epsilon_2 (1-\epsilon_1)(1-\epsilon_2))^{t_2}\cdot (1-\epsilon_2)^{t_3}.
\label{eq:sum2}
\end{eqnarray}
\end{proposition}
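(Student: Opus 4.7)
The plan is to exploit the symmetry of the two-receiver system. The ARQ encoding rule of \cite{ARQ} treats the two receivers symmetrically: the instruction to transmit a combination of the last unseen packets of each receiver is invariant under relabeling receiver 1 as receiver 2 and correspondingly swapping $\epsilon_1$ with $\epsilon_2$. Under this relabeling the four erasure events behave as follows: $A$ and $D$ are invariant, while $B$ and $C$ are interchanged. Hence an erasure of type $C$ that starts a chain at receiver 1 in the original system is the mirror image of an erasure of type $B$ starting a chain at receiver 2 in the relabeled system, so Proposition~\ref{prop:chain1} applies after renaming.

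First I would verify that every step of the proof of Proposition~\ref{prop:chain1} transports cleanly under the swap. The chain-breaking pairs $CA$ and $CC$ at receiver 2 correspond to $BA$ and $BB$ at receiver 1; the chain-extending pair $CB$ maps to $BC$; events of type $D$ still merely postpone decoding without affecting the combinatorial structure; and the sender's rule of transmitting an uncoded packet immediately after an erasure at the leading receiver is manifestly symmetric, since ``leading'' is a purely relative notion. Then I would translate the closed-form expression by substituting the swapped event probabilities into the general formula \eqref{eq:long}.

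In that substitution, $P(D) = \epsilon_1\epsilon_2$ is invariant and $P(CB) = P(C)P(B) = \epsilon_1\epsilon_2(1-\epsilon_1)(1-\epsilon_2)$ is invariant as well, so the inner summation retains its form. The prefactor $P(CA)+P(CC) = \epsilon_1(1-\epsilon_2)^2$ is replaced by its mirror $P(BA)+P(BB) = \epsilon_2(1-\epsilon_1)^2$, and the single-slot factor $P(A)+P(B) = 1-\epsilon_1$ is replaced by $P(A)+P(C) = 1-\epsilon_2$. Collecting these substitutions yields Equation \eqref{eq:sum2}. The only real obstacle, modest as it is, lies in confirming that the modification of \cite{sundararajan2008onc} discussed in the table example indeed preserves receiver interchangeability; once that is granted, no fresh analytical work is required beyond the symmetric relabeling just described.
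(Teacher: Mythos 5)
Your proof is correct and takes essentially the same route as the paper, which simply states that the result follows from the proof of Proposition~\ref{prop:chain1} by swapping events $B$ and $C$; your explicit verification of the substituted probabilities ($P(BA)+P(BB)=\epsilon_2(1-\epsilon_1)^2$ and $P(A)+P(C)=1-\epsilon_2$) is a more careful rendering of the same symmetry argument. The only stray remark is your concern about the modification from the later reference discussed in the table example: the proposition concerns the basic ARQ scheme, so that check is not needed.
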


\begin{proof}
  The proof follows analogously to the previous proof by swapping
  events $B$ and $C$.
\end{proof}

\subsubsection{Higher-Order Analysis}

As mentioned earlier, if a second erasure of type $B$ occurs while
receiver 2 is still processing an unbroken chain, the result can be
viewed as a marker that signals a future new chain. This second chain
will begin immediately after the receiver recovered from the current
chain. From then on, the distribution of the number of slots required
to break it is also given by Proposition~\ref{prop:chain1}. The delay
with which receiver 2 sees (and later decodes) the packet missed in
the second erasure will naturally be dependent on the number of slots
that pass between the second erasure and the breaking of the first
chain. A similar argument applies to the $k$th erasure event that
takes place while receiver 1 is recovering from the first chain. The
result is a marker for a $k$th chain, whose overall delay is given by
$D=\sum_{i=0}^kT_i$,
where $T_0$ is the number of slots between the $k$th erasure and the
breaking of the first chain and $T_i$ denotes the time it takes to
break chain $i$. Notice that $T_0$ follows the same distribution as
$T_i$, because it counts the time slots between an erasure and the
breaking of a chain. Clearly, $D$ results from the sum of the
independent and identically distributed random variables
$T_0,T_1,\dots,T_k$, and for this reason is itself a random variable.

\subsection{Multiple Receivers}

As should be expected, determining the various forms of delay becomes
increasingly complex for larger numbers of receivers. To gain some
insight, we start by observing that, at any point in time there will
be at least one leader, where leader(s) at time slot $t$ are one or
more of the receivers which have received the maximum number of
packets up to time slot $t$. The following proposition describes an
important property of the leader status.
\begin{proposition}
\label{prop:leader_decodes}
  A receiver that became a leader at time $t$ and stays in the group
  of leaders until it receives one more packet at time $t+\delta$ is
  able to decode all packets included in any of the linear
  combinations transmitted until $t+\delta$.\footnote{A receiver may
    become part of the group of leaders when it receives a packet and
    the leaders do not. However, if the next packet or packets are lost
    and the receiver drops out of the group of leaders before
    receiving another packet, it will not be able to break its
    chain. This is analogous to the events $CB$ and $CD$ in the
    previous example.} A receiver continues to be able to decode
  immediately all coded packets at $t'>t+\delta$, provided it remains
  the leader or a member of the group of leaders.
\end{proposition}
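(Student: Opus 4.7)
The plan is to track which packet indices can appear in any coded packet transmitted up to a given slot, and to combine this with the identity that a leader's rank always matches the total number of distinct indices ever introduced, so that a leader ends up holding a full-rank linear system in exactly the right set of unknowns.

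First I would establish a bookkeeping lemma that relies on the ANC invariant recalled in \sref{sec:existing}: after $k$ receptions, a receiver has seen exactly $p_1,\ldots,p_k$, so its oldest unseen packet is $p_{k+1}$. Because the sender transmits the XOR of the oldest unseen packets of all receivers, the largest-indexed packet appearing in the combination sent at slot $s$ is $p_{M(s-1)+1}$, where $M(s-1)$ denotes the maximum rank over the receivers just before slot $s$. Since $M$ is non-decreasing in $s$, the number of distinct packet indices ever introduced by the end of slot $t$ equals $N(t)=M(t-1)+1$.

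Next I would unpack the hypothesis that $R$ stays in the leader group from $t$ until it receives one more packet at $t+\delta$. Because ranks are non-decreasing, if any other receiver surpassed $R$ during $[t+1,t+\delta-1]$ then $R$ would fall out of the leader group, contradicting the assumption; hence $M$ must remain pinned to $R$'s rank throughout that interval. When $R$ receives at $t+\delta$, its rank jumps to $M(t)+1=M(t+\delta-1)+1=N(t+\delta)$, and $R$ is still in the leader group by hypothesis. Every combination $R$ has collected up to this moment was transmitted at some slot $s\le t+\delta$ and therefore involves only variables from $\{p_1,\ldots,p_{M(s-1)+1}\}\subseteq\{p_1,\ldots,p_{N(t+\delta)}\}$. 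Since the ANC rule forces each transmitted combination to contain $R$'s own oldest unseen packet with nonzero coefficient, every reception is innovative for $R$, so $R$ holds $N(t+\delta)$ linearly independent combinations supported on exactly $N(t+\delta)$ unknowns; Gaussian elimination then recovers all of $p_1,\ldots,p_{N(t+\delta)}$. The second sentence of the proposition follows by iterating the same argument at every later slot $t'$ at which $R$ receives while still being a leader: the identity $M(t')=N(t')$ persists as long as $R$ is never surpassed, so each new coded packet completes a full-rank system and is decoded immediately.

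The main obstacle I anticipate is making the leadership transitions airtight. In particular, $R$ may have joined the leader group at $t$ by tying with a prior sole leader rather than by strictly overtaking it, so the combination sent at $t$ might already contain an index one above $R$'s earlier rank; one then has to check that the ``stays in the leader group'' assumption really forbids any other receiver from receiving and thereby surpassing $R$ during $[t+1,t+\delta-1]$, which is exactly the $CB/CD$-type pathology highlighted in the proposition's footnote.
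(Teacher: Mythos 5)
Your proposal is correct and follows essentially the same route as the paper's proof: bound the support of every combination transmitted up to slot $t+\delta$ by the leaders' rank plus one (the paper's ``at most the first $d+1$ information units''), then use innovativeness of each reception to conclude that the catching-up receiver holds a full-rank square system in exactly those unknowns. Your $M(\cdot)$/$N(\cdot)$ bookkeeping and the explicit iteration for $t'>t+\delta$ merely formalize what the paper states more tersely with its fixed $d$.
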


\begin{proof} Assume the leaders lose a packet at time $t$ (otherwise
  no other receiver can become a leader) and suppose that they
  received the first $d$ packets up to that time. These packets carry
  encodings of at most the first $d+1$ information units. This ensures
  that leaders would have been able to decode a new packet, had they
  received the current transmission. A new receiver is now to become
  part of the leaders receives its next packet at time $t+\delta$ (and
  thus no other leaders receive a packet between $t$ and
  $t+\delta$). Since the coding algorithms are throughput optimal
  (i.e., each received packet is innovative) the receiver will have
  $d+1$ (coded) packets which are combinations of the first $d+1$
  original packets. It can thus solve the corresponding system of
  linear equations and decode all packets.
\end{proof}
As shown in the example of \tref{tab:example} for time slot 8, also
non-leaders may break chains and decode packets. However, as the
number of receivers increases and/or the erasure probabilities become
more heterogeneous, the probability that non-leaders can decode
becomes very small.

We use the fact that leaders can decode all packets to derive an upper
bound on the decoding delay in the multiple receiver case.  As we have
seen, the decoding delay is tightly connected to the time interval
between the moment in which a receiver ceases to be a leader and the
moment it is able to catch up and regain the leader status. Describing
the system in terms of the packets received by each of the receivers
leads to a state space which grows exponentially in the number of
receivers and is therefore intractable.

However, taking the point of view of one of the receivers, for
instance $R_1$, we can describe the evolution of the differences in
received packets between $R_1$ and the remaining receivers as a random
walk in an $(n-1)$-dimensional lattice, where $n$ is the number of
receivers. To develop some intuition, consider the case of three
receivers, denoted $R_1$, $R_2$, and $R_3$. Let $x_1$ denote the
difference of received packets between $R_1$ and $R_2$, and let $x_2$
describe the difference between $R_1$ and $R_3$. The state of the
system from the perspective of $R_1$ is thus described by the pair
$(x_1,x_2)$, which can be viewed as a point in two-dimensional
space. In each time slot, there are eight possible erasure events,
depending on whether each of the receivers suffers a packet loss or
not. If, in a given time slot, all receivers lose a packet or if there
are no packet losses, then the state does not change. In all other
cases, $x_1$ and $x_2$ will increase or decrease by one unit according
to the transition rules in \tref{tab:transition}, where once
again we use OK and E to denote successful reception and packet
erasure, respectively.

\begin{table}
\caption{Transition Rules for the Three-Receiver Case}
\vspace{-1em}
\label{tab:transition}
\begin{center}
\begin{tabular}{c c c c|l l}
Event & $R_1$ & $R_2$ & $R_3$ & Next State & Direction \\\hline
$E_0$&OK&OK&OK& $(x_1,x_2)$ &$\cdot$\\
$E_1$&OK&OK&E& $(x_1,x_2+1)$&$\uparrow$\\
$E_2$&OK&E&OK& $(x_1+1,x_2)$&$\rightarrow$\\
$E_3$&OK&E&E& $(x_1+1,x_2+1)$&$\nearrow$\\
$E_4$&E&OK&OK& $(x_1-1,x_2-1)$&$\swarrow$\\
$E_5$&E&OK&E& $(x_1-1,x_2)$&$\leftarrow$\\
$E_6$&E&E&OK& $(x_1,x_2-1)$&$\downarrow$\\
$E_7$&E&E&E& $(x_1,x_2)$&$\cdot$
\end{tabular}
\end{center}
\vspace{-1em}
\end{table}

Once we associate the erasure events with the corresponding
probabilities, which can be easily computed from the erasure
probabilities for each of the receivers, we obtain a random walk on a
two-dimensional lattice, as illustrated in
\fref{fig:random_walk}.

\begin{figure}
  \centering
 \includegraphics[width=8.5cm]{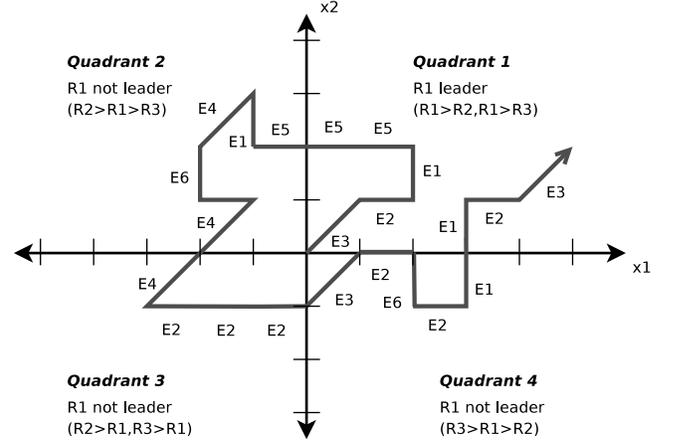}
 \vspace{-.5em}
  \caption{Random walk interpretation of the state evolution at receiver $R_1$.}
 \label{fig:random_walk}
 \vspace{-1em}
\end{figure}

Clearly, $R_1$ is a leader if and only if the coordinates $(x_1,x_2)$
lie in the first quadrant. In this case, $x_1$ and $x_2$ are both
positive (or zero) and $R_1$ has received either the same or a higher
number of packets than the other receivers. $R_1$ ceases to be a
leader, when its state position moves from the first quadrant to one
of the other three. Conversely, it becomes a leader again if its state
position moves back to quadrant one. The length of time $R_1$ spends
in each of the quadrants depends only on the erasure probabilities, or
equivalently the probabilities of erasure events $E_1\dots E_7$.

The proposed random walk model proves to be very useful for computing
upper bounds on the decoding delay experienced by receiver $R_1$. When
$(x_1,x_2)$ is in the first quadrant, we have that $R_1$ is the leader
and, thus, by Proposition~\ref{prop:leader_decodes}, every received
packet is immediately decoded, which is equivalent to zero
delay. Therefore, if $n$ is the number of slots in which $R_1$ is a
leader and $e$ is the number of erasures observed by $R_1$ during
those slots, $n-e$ packets have zero delay.

When $(x_1,x_2)$ lies outside the first quadrant (or, equivalently, when $R_1$ is not a leader), we can upper bound the delay of the packets transmitted in those slots as follows. Let $t_1$ be a time slot such that $(x_1,x_2)$ is in the first quadrant at slot in slot $t_1-1$ and elsewhere in slot $t_1$. Furthermore, let $t_2$ be the first time slot after $t_1$ such that  $(x_1,x_2)$ is again in the first quadrant in slot $t_2+1$. This means that during the slots between $t_1$ and $t_2$, $R_1$ is not a leader. In the worst case, by Proposition \ref{prop:leader_decodes}, all the packets transmitted between $t_1$ and $t_2$ will only be decoded at slot $t_2+1$. Hence, we can upper bound the delay of all these packets by $t_2-t_1$.

It is worth pointing out that generalizing this idea from three receivers to $n$ receivers forces
us to consider random walks in $n$-dimensional lattices. The class of random
walks we need can be deemed as untypical on several counts: (a) they assign non-uniform
probabilities to different directions by virtue of the properties of
online network coding, and (b) they admit the possibility that a node
stays in the same position. Close inspection of the related literature in 
probability theory reveals that the complete mathematical
characterization of integer random walks --- even for uniform
distributions in two dimensions --- offers non-trivial difficulties. A large
body of work is concerned with the number of points covered by the
random walk up to a certain time (see e.g.~\cite{caser:support}),
other contributions focus on hitting times on the coordinate
axis~\cite{cohen1992rwz} or among multiple random walks
~\cite{asselah}. At this time, providing a mathematical description of
the crossing times between quadrants of a non-uniform random walk is
clearly a daunting task, which justifies the use of numerical techniques
at the final stage of the proposed analysis.

Returning to the three-node case, Fig.~\ref{fig:multiple_chains}
illustrates the upper bounds on the decoding delay of receiver $R_1$
we obtain using the proposed methodology. As was to be expected, the
behavior of the cumulative distribution function can be very different
depending on whether the erasure probability of the receiver of
interest is equal to the erasure probabilities of the other receivers,
or lower, or higher.  With a higher erasure probability, as shown in
the bottom curve with $\epsilon_1=0.25$, $\epsilon_2 = 0.2$, and
$\epsilon_3 = 0.1$, almost all of the chains are broken only after the
leading receivers have received \emph{all} of their packets, resulting
in a very high decoding delay.  As an extreme case, if one of the
receivers has a perfect channel, there are no opportunities for any of
the other receivers to break chains.

\begin{figure}[t!]
  \centering
 \includegraphics[width=7cm]{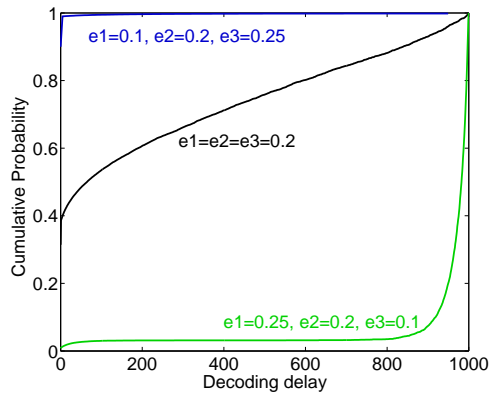}
 \vspace{-1em}
  \caption{CDF for the upper bound on the decoding delay of receiver
    $R_1$ for the
    proposed random walk model based on 10000 simulations with 1000
    slots for three different cases of erasure probabilities.}
\label{fig:multiple_chains}
 \vspace{-1em}
\end{figure}

Any effort to control the delay by means of informed coding decisions
amounts to creating opportunities for non-leading nodes to achieve the
leader status. In our random walk interpretation, this is equivalent
to pushing the target receiver to the first quadrant, whenever it has
spent more than an acceptable amount of time in the other three
quadrants. From a mathematical point of view, changing the encoding
rule corresponds to altering the probabilities assigned to each
direction of the random walk, thus making trajectories towards the
first quadrant more likely. Achieving this goal in practice is the
topic of the next section.

\section{Online Network Coding with Delay Constraints}
\label{sec:algo}

In the following, we present online network coding algorithms that are
targeted towards effective delay control. The underlying system setup
is the one we described in \sref{sec:existing}.


\subsection{Systematic Online Network Coding}

As argued in the previous section, the desirable properties of ANC in terms of
throughput optimality and a small sender queue size come at the expense of 
a potentially high decoding delay. We will now show that by allowing for 
some flexibility with respect to the sender's queue size, we can significantly 
reduce the average delay --- without sacrificing throughput.\\[-.5em]

{\bf Systematic Online Network Coding (SNC)}: \emph{A packet that is
  transmitted by the sender for the first time, is sent
  uncoded. Whenever the current leader suffers a packet loss, the
  next packet transmitted by the sender is a linear combination
  containing the last unseen packet of each receiver.}\\[-.5em]

An example of the SNC algorithm is given in \tref{tab:example2}. It
is not difficult to see that SNC does quite the opposite of ANC, in the sense that the
packets sent after receiver 2 loses a packet remain uncoded, whereas ANC
enforces their encoding. Conversely, a packet loss at receiver 1 causes
the transmission of the coded packet ${\bf p_1}\oplus{\bf p_7}$ in
time slot 8, whereas in the case of ANC the same event causes the 
transmissions of an uncoded packet.

The average queue size at the sender increases to
$\Omega((1-\epsilon)^{-2})$, the same as for traditional random linear
networking coding over all packets in the sender's queue \cite{ARQ},
since, for the algorithm to guarantee reliable communication, 
packets can only be removed from the sender's queue after they
have been successfully decoded at all receivers.

\begin{table}
\caption{Example of Systematic Online Network Coding}
\label{tab:example2}
\vspace{-0.5em}
\begin{center}
\begin{tabular}{c|c|c|c}
Time Slot &Sent Packet& Receiver 1 & Receiver 2\\\hline
1&${\bf p_1}$ &  OK &$E$\\
2&${\bf p_2}$  &OK &OK\\
3&${\bf p_3}$  &OK &OK\\
4&${\bf p_4}$  & OK & OK\\
5&${\bf p_5}$  & OK & $E$\\
6&${\bf p_6}$  &OK&OK\\
7&${\bf p_7}$  &$E$& OK\\
8&${\bf p_1}\oplus{\bf p_7}$  & OK&OK\\\hline
9&${\bf p_8}$  & OK&OK\\
10&${\bf p_9}$  & E&OK\\
11&${\bf p_5}\oplus{\bf p_9}$  & OK&E\\
12&${\bf p_{10}}$  & OK&OK\\
\end{tabular}
\end{center}
\vspace{-2em}
\end{table}

\begin{proposition}
  With SNC, each received packet is innovative and the algorithm is
  thus throughput optimal.
\end{proposition}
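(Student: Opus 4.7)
The plan is to verify the innovation property by splitting into the two cases that arise from the SNC rule and then to conclude throughput optimality as an immediate consequence.

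First I would handle the easier case: a packet that is sent uncoded. By the SNC rule, an uncoded transmission occurs only when the packet is being sent for the very first time, so no receiver holds any prior linear combination that involves it. Consequently, if receiver $i$ successfully receives this packet, its reception strictly increases the dimension of $i$'s knowledge space by one, so the packet is innovative. This part does not require any field size assumption.

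Second, I would treat the coded transmissions. By the SNC rule these occur only after the current leader suffers an erasure, and take the form of a linear combination of the last unseen packets of all $n$ receivers. To show innovation at a given receiver $i$, I would exploit the defining property of ``seen'': every receiver has already seen every packet that is strictly older than its own last unseen packet. Hence, reducing the transmitted combination modulo $i$'s current knowledge space leaves a combination whose support lies entirely in packets that $i$ has not yet seen. Provided the coding coefficients are chosen over a sufficiently large field (or with the deterministic rule used in \cite{ARQ}), the residual combination is nonzero, so the received packet adds a new dimension to $i$'s knowledge space. This argument is exactly the one established for the ANC algorithm in \cite{ARQ}; I would invoke it rather than rederive it.

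Combining the two cases gives that every received packet is innovative. Throughput optimality then follows by a standard counting argument: after $k$ successful receptions at receiver $i$, the rank of its observation matrix is $k$, so the long-run decoding rate matches the packet reception rate $1-\epsilon_i$, which is the capacity of its erasure channel. The main obstacle is the coded case, since the proof of innovation hinges on the interplay between the ``last unseen'' abstraction and the receivers' evolving knowledge subspaces; the cleanest route, as noted above, is to reduce it to the innovation lemma already proved in~\cite{ARQ}.
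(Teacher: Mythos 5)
Your proposal matches the paper's own proof in both structure and substance: uncoded packets are innovative because they are transmitted for the first time, and coded packets are combinations of the oldest unseen packets exactly as in ANC, so innovation (and hence throughput optimality) is inherited from the corresponding result in~\cite{ARQ}. Your added remarks on the ``seen'' abstraction and field size only elaborate the same reduction, so no further comparison is needed.
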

\begin{proof}
  A packet is sent uncoded only if it was never transmitted
  previously. Such a packet is clearly innovative for each receiver
  that obtains it. Coded packets are combinations of the oldest
  unseen packets of all receivers, as in ANC. Their reception thus
  causes a receiver to see the oldest unseen packet, which corresponds
  to an increase of the dimension of the information subspace
  available at that receiver. The proof follows analogously to the
  proof of throughput optimality for ANC in Theorem 3 of \cite{ARQ}.
\end{proof}

Since most of the packets are sent uncoded, the average packet delay
is much smaller than with ANC. For the same reason, the number of
required encoding and decoding operations is vastly reduced compared
to ANC. In short, the SNC algorithm builds up chains over missing
packets, not over all of the packets that follow a packet loss.

For real-time traffic, achieving throughput optimality is usually not
possible, since packets cease to be useful to the application if
they are delivered only after a certain deadline. In such a case, the
sender has to give up on those packets. In this case, for SNC, only
the missing packet is skipped, and, whenever possible, the sender continues to try to
repair other missing packets within their deadline. ANC, on the other hand,
loses the whole chain, which leads to a substantial reduction in throughput.

\subsection{Online Network Coding with a Delay Threshold}

To further reduce the worst-case delay for each receiver, we can trade
off some of the throughput for a substantial reduction in delay. As a
simple first measure to reduce delay, we can retransmit a packet in
uncoded form, in case a packet
deadline is in danger of being violated. \\[-.5em]

{\bf Systematic Online Network Coding with a Delay Threshold (SNCT)}:
\emph{Let $t$ be the current time slot and $t^i_j$ be the deadline for
  packet ${\bf p_j}$ at receiver $i$.  The sender proceeds as before,
  as long as $t<t^i_j-\delta$ for all undecoded packets $j$ and for
  each receiver $i$. In case a deadline for an undecoded packet $j$ is
  ``in danger'' and $t \geq t^i_j-\delta$, packet ${\bf p_j}$ is sent
  uncoded. In case receiver $i$ loses packet ${\bf p_j}$, this packet
  is repeated until the packet is either received or the deadline
  expires. In case multiple deadlines are in danger, the sender
  randomly picks a suitable undecoded packet and sends it in uncoded form.}\\[-.5em]

The same concept can be applied to ANC to obtain {\bf ANC with a Delay
  Threshold (ANCT)}. Again, the sender proceeds as with ANC, as long
as no deadlines for yet undecoded packets are in danger, and otherwise
transmits the corresponding packet in uncoded form. Once the packet is
repaired or the deadline expires, the sender resumes with transmitting
combinations of the last unseen packets of all receivers. Note that in
this case, it is no longer possible for an ANCT sender to discard
packets once they are seen by all senders. Consequently, the expected
queue size for ANC also increases to $\Omega((1-\epsilon)^{-2})$, the
same as SNC and SNCT.

\section{Simulation Results}
\label{sec:results}

We investigate the performance of the different algorithms and the
impact of imposing decoding deadlines by means of simulation.  We use
a custom simulator with a full implementation of ARQ for network
coding without and with delay threshold (ANC, ANCT), as well as
systematic online network coding (SNC, SNCT). Packets are broadcast by
the sender and, as before, we use a simple channel model with
independent erasures for the different receivers. For all simulations,
the sender has 100 original packets to transmit to the receivers. The
metrics we consider are the mean and worst-case decoding delay, per
receiver throughput, and average and maximum queue size for the coding
at the sender, averaged over multiple simulation runs. The throughput
is calculated as the total number of packets decoded, divided by the
number of time slots it took until the last packet was decoded.

\begin{figure*}
\subfigure[Maximum and average decoding delay]{
  \label{fig:dec_delayN8}
  \includegraphics[width=0.5\textwidth]{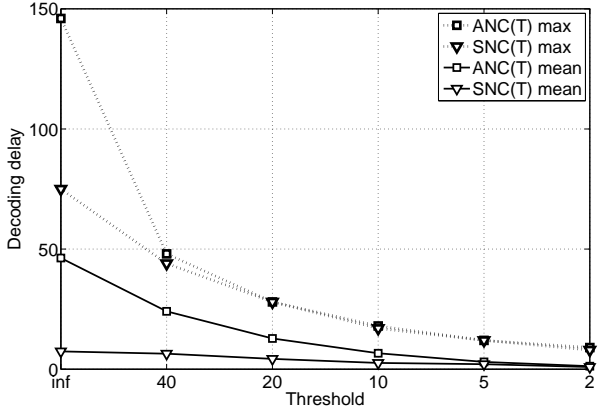}
  \vspace{-3.5em}
}
\subfigure[CDF of decoding delay]{
  \label{fig:sort_delayN8}
  \includegraphics[width=0.5\textwidth]{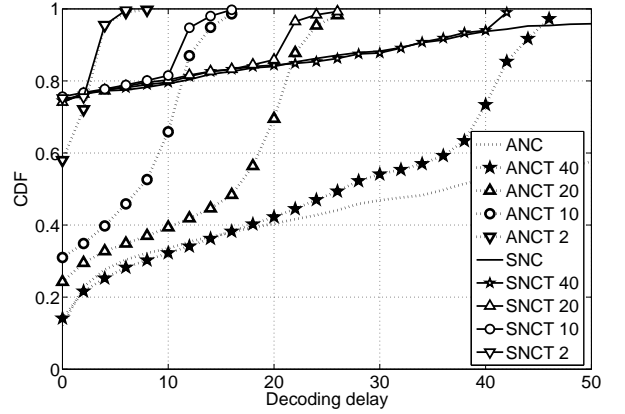}
  \vspace{-3.5em}
}
\subfigure[Average as well as minimum and maximum throughput]{
  \label{fig:thrN8}
  \includegraphics[width=0.5\textwidth]{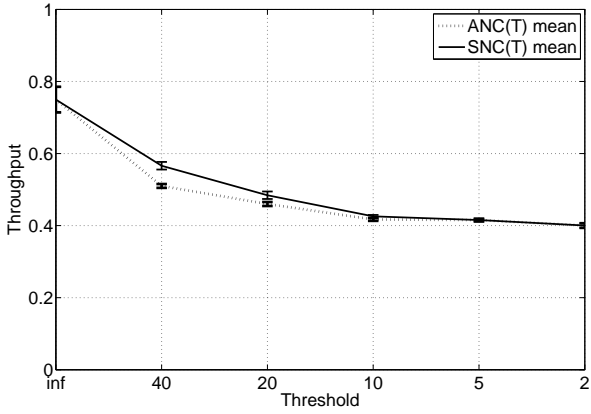}
  \vspace{-3.5em}
}
\subfigure[Maximum and average sender queue size]{
  \label{fig:queueN8}
  \includegraphics[width=0.5\textwidth]{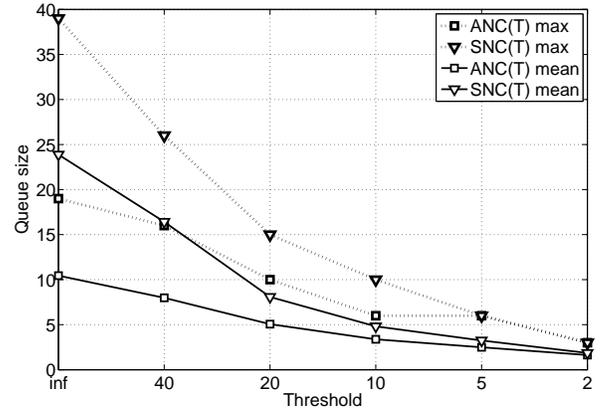}
  \vspace{-3.5em}
}
\caption{Analysis of different delay thresholds ($\infty, 40, 20, 10,
  5, 2$) for 8 receivers and $\epsilon=0.25$}
  \vspace{-1em}
\end{figure*}

\subsection{Impact of the Delay Threshold}

The value of the delay threshold is the main parameter that allows to
trade off throughput for delay. Low delay thresholds increase
fairness, since the difference in terms of number of received packets
between the leaders and other receivers is reduced. They also reduce
sender queue size for the same reason. For the simulations, we use a
receiver set size of $N=8$ and the channels to the receivers all have
erasure probability $\epsilon=0.25$. The sender continues to send
packets according to the respective algorithms, until all receivers
are able to decode all packets.

\fref{fig:dec_delayN8} shows the maximum and the mean decoding delay
for ANC and SNC without delay threshold (inf.) and for thresholds
between 2 and 40. Despite the same erasure probability for all
receivers, the worst case delay for ANC is close to duration of the
simulation, i.e., the worst receiver has a single chain which is not
broken until the very end. SNC fares much better with a worst case
delay only half as large. Since with 8 receivers, ANC transmissions
are usually combinations of 8 packets, a single chain may have many
packets missing (not just a single one as in the two receiver
case). While the oldest unseen packet that started the chain is seen
with ANC at the same time as it is seen with SNC, SNC can decode that
packet earlier than ANC is able to break the full chain.

The distribution of delay for all receivers is shown in
\fref{fig:sort_delayN8}. For SNC(T), 75\% of the packets are received
without delay. The higher the threshold, the later the CDF curve jumps
to a cumulative probability of 1, with the highest delay for plain
SNC, reaching 1 for a delay of ~75. ANC(T) has a much more varied
distributions of delays, with less than a third of the packets
received without delay in most cases.

Mean throughput is given in \fref{fig:thrN8}, and the error bars
indicate maximum and minimum throughput among the receivers. Optimum
throughput is at 0.75 for ANC and SNC without threshold. (Note that
the throughput of a receiver with erasure probability $\epsilon_i$
will be less than $1-\epsilon_i$ w.h.p.) The throughput reduction
caused by the threshold is on the order of 30\% to 40\%, depending on
the delay threshold chosen. As we decrease the threshold and enforce
lower delays, we can also see how maximum and minimum throughput get
closer and closer, leading to higher throughput fairness among the
receivers. Throughput is very close for both ANC and SNC; however, SNC
achieves a slightly more homogeneous distribution of throughput and
thus better fairness.

Also queue size decreases when a threshold is introduced, as shown in
\fref{fig:queueN8}. As expected, SNC requires a larger sender queue
size than ANC. However, the introduction of a threshold keeps the
maximum queue size to levels that are small enough to be easily
manageable at a sender, even for larger batches of packets to be
sent. Overall, the benefits in terms of delay improvements outweigh
the queue size increment for most realistic application scenarios.

\begin{figure*}
  \subfigure[Maximum delay (top) and average delay (bottom)]{
    \label{fig:delayN}
    \includegraphics[width=0.5\textwidth]{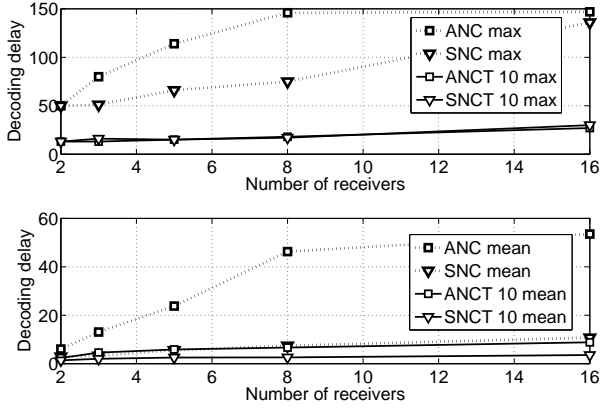}
    \vspace{-3.5em}
  }
  \subfigure[Average as well as minimum and maximum throughput]{
    \label{fig:thrN}
    \includegraphics[width=0.5\textwidth]{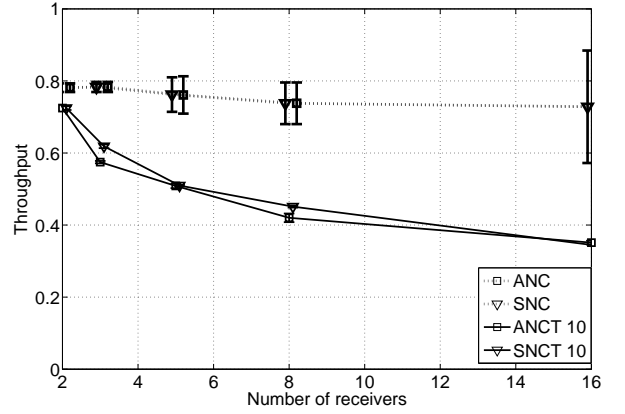}
    \vspace{-3.5em}
  }
  \caption{Analysis of different sizes of receiver sets from 2 to 16 for
    thresholds $\infty$ and 10, and $\epsilon=0.25$}
  \vspace{-1em}
\end{figure*}

\begin{figure*}
  \subfigure[Maximum delay (line) and average delay (bar)]{
    \label{fig:dec_eps}
    \includegraphics[width=0.5\textwidth]{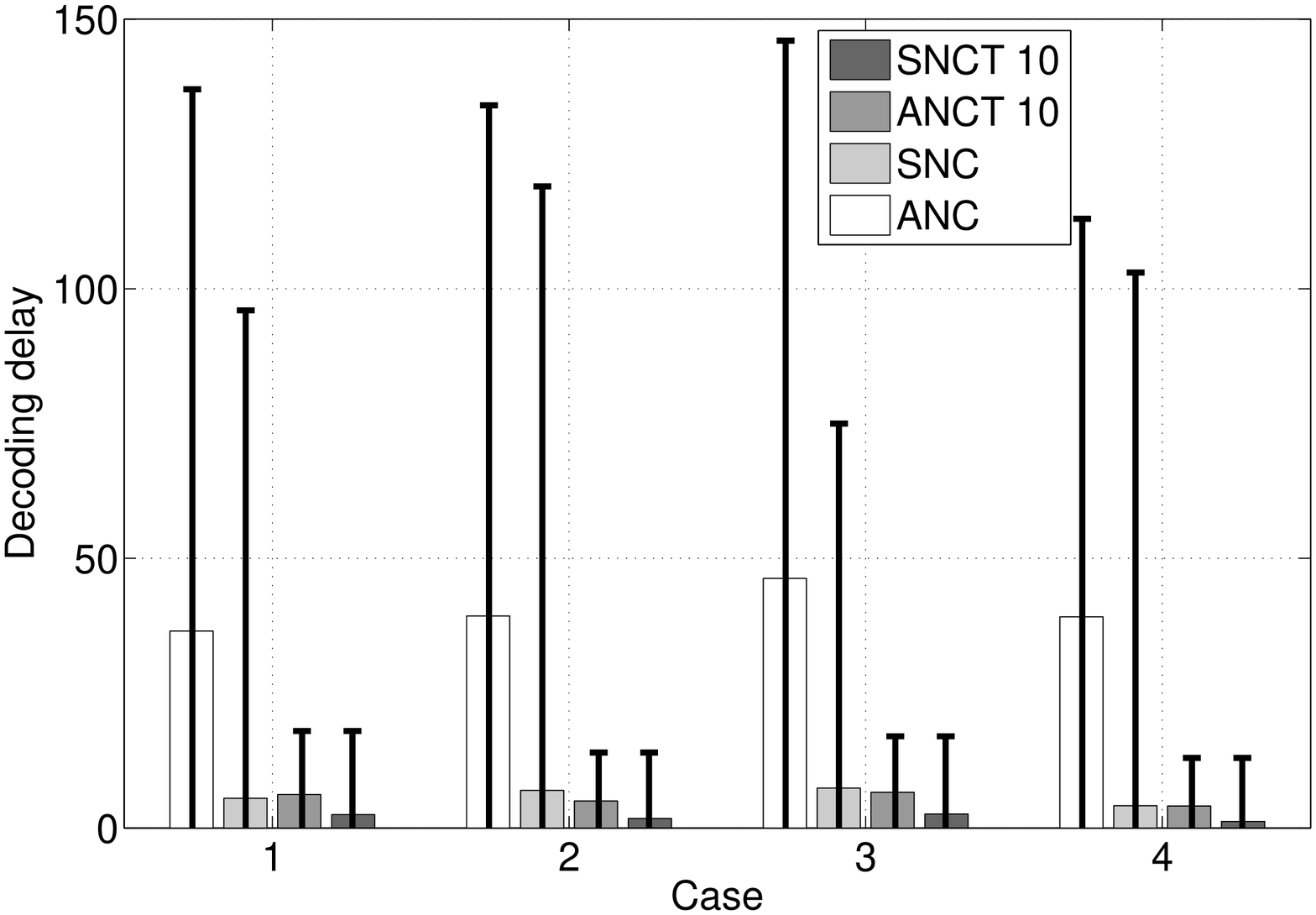}
    \vspace{-3.5em}
  }
  \subfigure[Average (bar) as well as minimum and maximum throughput (line)]{
    \label{fig:thr_eps}
    \includegraphics[width=0.5\textwidth]{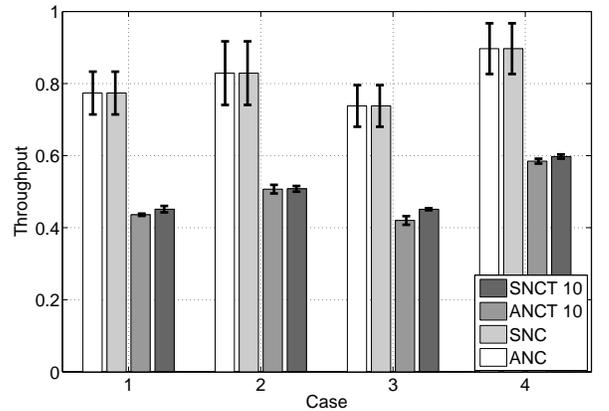}
    \vspace{-3.5em}
  }
  \caption{Analysis of four different cases of heterogeneous (1,2) and
    homogeneous (3,4) erasure probabilities for thresholds $\infty$
    and 10, and for 8 receivers}
  \vspace{-1em}
\end{figure*}

\subsection{Impact of the Size of the Receiver Set}

As the number of receivers increases, their heterogeneity in terms of
number of received packets will increase as well, even if they all
have the same channel erasure probability. Furthermore, as discussed
in \sref{sec:delay}, also the probability that non-leaders can decode
early becomes very small. The top graph in \fref{fig:delayN} shows
that the maximum decoding delay for ANC quickly approaches the
duration of the simulation of ~150 time slots. While SNC has a similar
worst case delay for very small and very large receiver sets, delays
for intermediate sizes are significantly lower. For two receivers,
chains with ANC are only missing a single packet which is repaired at
the same time, as the packet is repaired through a coded transmission
with SNC. Similarly, for large receiver sets, decoding the coded
repair packet with SNC becomes as hard as obtaining enough degrees of
freedom to decode a full chain.  Introducing a delay threshold clearly
lowers the worst case delay for both ANC and SNC. With a threshold of
10, packets are delivered at the latest after around 20 slots, due to
multiple packets that need to be repaired at the same time, and loss
of uncoded repair packets.

More importantly, for ANC the average delay even exceeds 50 time
slots. Given that packets originate over the course of the simulation,
the worst possible average delay is 75. This indicates that the vast
majority of packets for most of the receivers can only be decoded at
the end of the simulation and intermediate decoding is rare. Average
delay for SNC is comparable to the average delay for ANCT with a
threshold, while the average delay for SNCT is negligible.

As expected, throughput for ANC and SNC, as well as ANCT and SNCT is
comparable. However, the very short delay threshold of only 10 as used
in these simulations has quite a significant impact on throughput. For
algorithms with delay threshold, throughput may drop down to around
$\frac{1}{2}$ of that without threshold.


\subsection{Impact of the Erasure Probabilities}

We finally evaluate algorithm performance for four different
homogeneous and heterogeneous sets of erasure probabilities.  The
first case has 7 receivers with erasure probability 0.25, and a single
receiver with erasure probability 0.15. The second case has evenly
distributed erasure probabilities, with two receivers at erasure
probability 0.25, two receivers at 0.2, two receivers at 0.15, and two
receivers at 0.1. For the homogeneous cases, for all receivers we use
an erasure probability of 0.25 in case 3 and of 0.1 in case 4. As
before, there are 8 receivers and a delay threshold of 10 for ANCT and
SNCT.

Since for 8 receivers even the homogeneous case shows a worst case
delay close to the overall duration of the simulation, heterogeneous
erasure probabilities cannot exacerbate the performance, as can be
seen from \fref{fig:dec_eps}. For smaller sizes of receiver sets and
for higher numbers of transmitted packets, intermediate decoding
occurs more frequently and a much more significant advantage for the
use of delay thresholds in terms of maximum decoding delay can be
observed. Again, we see a throughput reduction caused by the use of
delay thresholds of around 40\%, as shown in
\fref{fig:thr_eps}. However, it is important to note that if packets
do become useless after their deadline expires and are discarded by
the receivers, the throughput reduction caused by deadline violations
is larger than that caused by the use of a delay threshold.

\section{Imperfect or Delayed Feedback}
\label{sec:imperfect}

So far, we assumed that the source always has perfect knowledge of the
decoding status of each receiver, which is usually not feasible in
practice. The overhead incurred by continuous feedback from all the
receivers may be too high. Furthermore, the feedback channel may
experience erasures, bit errors, and delay.

Consider the case where the feedback packet of a certain receiver is
lost. When performing the next coding decision, the source does not
know whether that receiver has received the previously sent packet or
not. In this situation, the source can 1) assume that the receiver
received that packet and now requires the next unseen packet, 2)
assume that the receiver missed the packet and still requires the
unseen packet reported previously, 3) perform a random experiment to
decide whether to consider the packet as received or lost, or 4)
ignore the receiver and not include any of its unseen packets in the
coding decisions until feedback from the receiver is heard again.

The tradeoff between throughput and delay is reflected also in the
treatment of missing feedback. In principle, the more optimistic the
sender is in its assumptions about packet reception, the higher the
expected throughput for the receivers, but the higher the risk of
increased chain lengths (and thus delay).  With the systematic
encoding algorithms, the main event that needs to be detected is
packet loss at the leader which allows to send coded repair
packets. Whenever this event is detected late due to feedback loss,
the same coded packet can be sent, but the delay of the repaired
packets increases by the corresponding amount. Whenever the event is
declared erroneously, throughput for the leader decreases since a
non-innovative packet is sent but other receivers may decode a
previously lost packet earlier.  These considerations are also
confirmed by preliminary simulations with our algorithms.  Thus, the
algorithms need continuous feedback from the leader, but they can
easily be modified to have less frequent feedback from other receivers
to reduce control overhead. Outdated information for the other
receivers is often unproblematic, since only the oldest unseen packet
is repaired. To this end, feedback about multiple packets can be
aggregated in a feedback vector \cite{nc_xor}.  Distributed algorithms
to effectively limit the amount of feedback have been developed, e.g.,
in the context of reliable multicast \cite{Fuhrmann2001scaling}.



\section{Conclusions}
\label{sec:conclusions}

Taking the decoding delay as our primary figure of merit, we analyzed
how the encoding rules of online network coding with feedback affect
its performance, in scenarios with homogeneous and heterogeneous
erasure probabilities. By describing the information backlog of
different receivers in terms of chains of undecoded packets and by
mapping their relative behavior in terms of a particular class of
random walks, we were able to show that without effective delay
control, weaker receivers are unlikely to recover from erasures within
reasonable time and that decoding success is essentially dependent on
having some advantage over other receivers in terms of received
packets.

This observation motivated us to re-design the encoding stage to
ensure that all receivers are able to decode at least some of the
packets most of the time. Surprisingly, as our extensive simulation
study shows, sending a large fraction of the packets in uncoded form
provides a throughput optimal solution with striking gains in terms of
decoding delay.  This is particularly useful for streaming
applications with stringent delay requirements, in particular when the
source codec is able to cope with missing packets and thus in-order
delivery of all packets is not really required.

As part of our ongoing work, we intend to combine our algorithms with
a state-of-the-art video codec and tune their performance to improve
the perceived video quality. This approach involves some
prioritization in the decision of which packets to combine, to take
into account their relative importance when decoding the video stream.
We also observe that sending a packet that is in danger of missing a
deadline in uncoded form is only a first step towards delay optimized
coding algorithms.  Coding strategies that ensure immediate decoding
of such packets at the lagging receiver, while providing the other
time-constrained receivers with innovative linear combinations, will
further improve performance. Finally, in order to design algorithms
that work well in practice, the implications of imperfect feedback
need to be investigated in more detail.

\bibliographystyle{IEEEtran}
\bibliography{main}

\end{document}